\documentclass[aps,prl,twocolumn,superscriptaddress,floatfix]{revtex4-2}

\usepackage{float}
\usepackage{amsthm}
\usepackage{pdfpages}   % 必须保留
\usepackage{pgffor}     % 用于循环，后面插入补充材料时建议使用，但不是必选项

\makeatletter
\AtBeginDocument{\let\LS@rot\@undefined}
\makeatother
\usepackage[table]{xcolor}
\usepackage{caption}
\usepackage{amsmath}
\usepackage{ragged2e}
\usepackage{booktabs} 
\usepackage[justification=justified,singlelinecheck=false]{caption}
\usepackage{mathtools}
\usepackage{amsmath}
\usepackage{amssymb}
\usepackage{graphicx}
\usepackage{float}
\usepackage{xcolor}
\usepackage{braket}
\usepackage{pdfpages}
\usepackage{tikz}
\usetikzlibrary{shapes, arrows.meta, positioning, decorations.pathmorphing, backgrounds, fit, calc, decorations.pathreplacing}
\usepackage{tcolorbox}

\newtheorem{thm}{Theorem}
\newtheorem{lem}{Lemma}

\begin{document}

\title{On-Chip Quantum Randomness Amplification}

\author{Lang Li}
\email{langli@hku.hk}
\affiliation{QICI Quantum Information and Computation Initiative, School of Computing and Data Science, The University of Hong Kong, Pokfulam Road, Hong Kong, China}% 编号 1
\affiliation{HKU-Oxford Joint Laboratory for Quantum Information and Computation} % 编号 3

\author{Yutian Wu}
\affiliation{QICI Quantum Information and Computation Initiative, School of Computing and Data Science, The University of Hong Kong, Pokfulam Road, Hong Kong, China}
\affiliation{HKU-Oxford Joint Laboratory for Quantum Information and Computation}

\author{Giulio Chiribella}
\email{giulio@cs.hku.hk}
\affiliation{QICI Quantum Information and Computation Initiative, School of Computing and Data Science, The University of Hong Kong, Pokfulam Road, Hong Kong, China}
\affiliation{HKU-Oxford Joint Laboratory for Quantum Information and Computation}
\affiliation{Department of Computer Science, University of Oxford, Wolfson Building, Parks Road, Oxford, UK} % 编号 4
\affiliation{Perimeter Institute for Theoretical Physics, 31 Caroline Street North, Waterloo, Ontario, Canada} % 编号 5 

\author{Ravishankar Ramanathan}
\email{ravi@cs.hku.hk}
\affiliation{QICI Quantum Information and Computation Initiative, School of Computing and Data Science, The University of Hong Kong, Pokfulam Road, Hong Kong, China}
\affiliation{HKU-Oxford Joint Laboratory for Quantum Information and Computation}

\begin{abstract}
Randomness amplification, the task of extracting uniform  private bits from biased seeds that may be partly known by a malicious third party, is of central importance in cryptography.  The highest  security in this task is provided by a class of quantum protocols known as  device-independent,  which however are challenging to integrate into scalable devices. Semi-device-independent (SDI) protocols are a promising alternative that guarantees security under few natural assumptions,  such as bounds on the amount of energy used by the devices.  Here, we provide the first demonstration of SDI randomness amplification on an integrated silicon photonic chip, achieving a throughput rate of $20$ Mbps suitable for practical applications. This rate is achieved through  a novel technique  for  SDI entropy certification, which delivers strictly tighter von Neumann entropy bounds compared to existing methods and remains valid even if  the preparation and measurement devices share quantum correlations.  Overall, the methods developed in this work enable the integration of SDI technology into portable telecom devices, opening up a new generation of quantum cryptographic hardware.
 \end{abstract}
\flushbottom
\maketitle
 
\thispagestyle{empty}

\section*{Introduction}\label{Intr}

\indent Randomness amplification— the task of extracting private and uniform bits from biased and partly compromised seeds — has applications in  cryptographic primitives, as well as in randomized algorithms, scientific computation, and online gaming \cite{Colbeck2012,Pironio2010,Acin2016,Bierhorst2018},  
While randomness cannot be amplified in the world of classical physics, quantum mechanics provides an unprecedented opportunity to generate high-quality, cryptographically secure randomness.   
The highest level of security is achieved by  device-independent (DI) protocols, which   guarantee unpredictability against general attackers without making any assumption on the internal functioning of the devices~\cite{Ramanathan2016,Brandao2016}. Recently,  proof-of-principle demonstrations of  DI randomness amplification have been provided \cite{Kulikov2024,Foreman2023,Ramanathan2020}.  A practical limitation  of DI  protocols, however, is that they have  demanding requirements,  including high-visibility entanglement, loophole-free detection,  stringent noise suppression, and  spacelike separation between two or more measurement stations, which typically results into bulky  setups. 
  These requirements  limit scalability, and tend to result in low generation rates that  significantly restrict  potential applications.  

A promising approach is provided by semi-device-independent (SDI) protocols \cite{Roch2025,Van2017}, which maintain most of the features of DI security while making minimal assumptions on the devices,  such as bounds on the amount of energy used by them \cite{Van2017, Senno2021}. Leveraging this limited knowledge,  SDI protocols can  greatly  enhance the speed of  randomness amplification and remove the need of spacelike separation,  thereby enabling fast and compact setups.    
%enabling a promising route toward scalable integrated implementations 
%Integrated photonics emerges as a transformative platform for quantum technologies, offering compact, stable, and mass-producible systems that integrate state preparation, manipulation, and detection on a single chip~\cite{Wang2020,Pelucchi2022}. 
However,  experimental demonstrations of such fast and compact setups  remain absent. 
%{\color{blue} [why do we mention DI procotols here? We already conclude the part on DI. Now we need to provide the state of the art of SDI implementations.]}. 
Chip-scale realizations are notably lacking, as existing SDI protocols have not yet been adapted to the constraints of integrated photonics. In particular,  miniaturization and high-speed operation demand optimized entropy bounds under realistic noise and loss profiles~\cite{Pelucchi2022, Bartolucci2024,Bogaerts2024}, which however have been lacking so far. \\
%\indent SDI protocols are particularly amenable to this architecture, as their partial trust assumptions align with the controlled environments of photonic waveguides, enabling energy-bounded schemes that leverage photon-number constraints for security~\cite{Roch2025,VanHimbeeck2017}. Such energy-limited frameworks facilitate high-speed randomness amplification without the need for entanglement distribution over long distances, making them ideal for on-chip implementations.  However, realizing these advantages requires operation at an ultralow mean photon number while simultaneously sustaining shot-noise-limited heterodyne detection, ultra-low propagation loss, suppressed free-carrier absorption, and high phase-modulation efficiency, placing stringent constraints on silicon photonic integration\cite{Bartolucci2024,Bogaerts2024}. Moreover, in energy-constrained SDI settings, entropy estimation is inherently suboptimal due to reliance on fixed-node Gauss--Radau spectral methods \cite{Brown2024} for bounding nonlinear functionals such as the von Neumann entropy~\cite{Kessler2020,Shirokov2018}. This rigidity leads to conservative eigenvalue approximations and loose bounds in the restricted state spaces of practical devices, ultimately suppressing extractable randomness and hindering scalability in chip-scale systems.\\
\indent In this paper,  we report the first implementation of SDI randomness amplification on a fully integrated silicon photonic chip.  The chip has  dimensions of $3~\mathrm{mm}\times11~\mathrm{mm}\times11~\mathrm{mm}$, incorporates quantum state preparation and measurement, and achieves amplification for sources with a Santha-Vazirani (SV) structure \cite{Santha1986} in which each bit can deviate from the uniform distribution by an amount up to $\varepsilon \leq 0.3$. As a typical example, for an initial bit string  of $10^{11}$ bits with a practically relevant  bias of $\varepsilon=0.12$, our setup  produces an output string of length $1.8\times10^{9}$ that is $\varepsilon_{\text{sec}}\le 10^{-12}$-close to uniform, at an overall rate of $20$ Mbps, suitable for applications such as  VPNs, digital signatures,  and gaming. 

The  high rate of our setup is enabled not only by advances in the photonic hardware, but also by a novel technique for certifying the amount of randomness present in the output bits. This technique, called the variational Gauss-Radau method, offers  tighter bounds on the single-round von Neumann entropy of the output bits,  and works even in the presence of prior correlations between the state preparation and the measurement devices.  Overall, our results establish a scalable integrated photonic platform for quantum randomness amplification, opening up a route towards secure SDI  randomness generation  in portable devices such as laptops and mobile phones.
%, and, more generally, tasks that require protecting sensitive data-in-transit. 

\begin{figure*}[t]
\centering
\begin{tikzpicture}[
    adversary/.style={rectangle, fill=orange!80!black, rounded corners=4pt, minimum width=1.6cm, minimum height=0.8cm, text=white, font=\bfseries\Large},
    svband/.style={rectangle, fill=cyan!10, draw=cyan!30, thick, rounded corners=4pt, minimum width=1.4cm, minimum height=8.0cm, align=center, font=\ttfamily\small, text=black!80},
    % Extractor box made taller (8.0cm) but kept same width (1.8cm)
    extractor/.style={rectangle, fill=orange!15, draw=orange!40, thick, rounded corners=4pt, minimum width=1.8cm, minimum height=8.0cm, align=center, font=\bfseries\sffamily},
    % Vertically-stacked outband style (Removed background color)
    outband_v/.style={rectangle, draw=none, thick, rounded corners=4pt, minimum width=1.4cm, minimum height=8.0cm, align=center, font=\ttfamily, text=black!80},
    arrow/.style={-{Stealth[scale=1.2]}, thick, draw=black!80},
    qchannel/.style={-{Stealth[scale=1.5]}, line width=2pt, draw=blue!70!black}, 
    entanglement/.style={decorate, decoration={snake, amplitude=1mm, segment length=4mm}, line width=1.5pt, orange!90},
    sideinfo/.style={dashed, thick, orange!90, -{Stealth[scale=1.2]}},
    constraint/.style={rectangle, fill=white, draw=blue!70!black, thick, rounded corners=3pt, inner sep=4pt, font=\bfseries\color{blue!80!black}}
]

    % ==========================================
    % 3D BOX MACRO (Recalibrated for Wider Boxes)
    % ==========================================
    \newcommand{\drawthreedbox}[4]{
        % #1: x, #2: y, #3: node name, #4: label text
        % Top face (Widened to match 2.4cm front face)
        \filldraw[fill=black!60, draw=black!90, thick, line join=round] (#1-1.2, #2+0.6) -- (#1-0.8, #2+1.0) -- (#1+1.6, #2+1.0) -- (#1+1.2, #2+0.6) -- cycle;
        % Right Side face
        \filldraw[fill=black!85, draw=black!90, thick, line join=round] (#1+1.2, #2+0.6) -- (#1+1.6, #2+1.0) -- (#1+1.6, #2-0.2) -- (#1+1.2, #2-0.6) -- cycle;
        % Front face (Widened to 2.4cm so "Meas (M)" fits perfectly)
        \node[minimum width=2.4cm, minimum height=1.2cm, fill=black!75, draw=black!90, thick, text=white, font=\bfseries\Large, inner sep=0pt] (#3) at (#1,#2) {#4};
        % Anchor coordinates for easy routing
        \coordinate (#3_top) at (#1, #2+1.0);
        \coordinate (#3_bottom) at (#1, #2-0.6);
        \coordinate (#3_right) at (#1+1.6, #2+0.2);
        \coordinate (#3_left) at (#1-1.2, #2); 
    }

    % ==========================================
    % 1. LEFT COLUMN: SV SOURCE
    % ==========================================
    \node[svband] (sv) at (0, -0.5) {\vdots \\ 0\\1\\1\\1\\0\\1\\0\\[0.2cm] \vdots \\ \vdots \\[0.2cm]0\\1\\1\\1 \\ \vdots \\ \vdots};
   \node[
    above=0.2cm of sv,
    font=\sffamily\normalsize,
    align=center,
    text=black!90
] (svlabel)
{$\varepsilon$-biased\\SV source};

    % ==========================================
\node[
    adversary,
    minimum width=2.0cm,
    minimum height=1.3cm,
    font=\bfseries\LARGE,
    align=center
] (eve) at (-2.8,4.2)
{
Eve\\[-2mm]
{\Large\color{white}$\lambda$}
};

% Bracket 1: Eve's part (top)
\draw[decorate, decoration={brace, amplitude=5pt}, thick, draw=black!80]
    (-0.8, 2.2) -- (-0.8, 3.4)
    coordinate[midway, xshift=-6pt] (eve_brace);

% lambda entangled with the SV source
\draw[entanglement, dashed]
    (eve.south) to[bend right=25] (eve_brace);
    % Bracket 2: Input X (middle)
    \draw[decorate, decoration={brace, amplitude=6pt}, thick, draw=black!80] 
        (-0.8, -0.8) -- (-0.8, 2.0) node[midway, left=0.2cm, font=\bfseries] {Input $\mathbf{X}$};

    % Bracket 3: Seed Z (bottom)
    \draw[decorate, decoration={brace, amplitude=6pt}, thick, draw=black!80] 
        (-0.8, -4.4) -- (-0.8, -1.0) node[midway, left=0.2cm, font=\bfseries] {Seed $\mathbf{Z}$};

    % ==========================================
    % 3. UNTRUSTED DEVICES (3D P and M)
    % ==========================================
    \drawthreedbox{4.4}{2.0}{prep}{Prep (\textbf{P})}
    \drawthreedbox{4.4}{-1.5}{meas}{Meas (\textbf{M})}

    % Potential Pre-shared Entanglement (\sigma_PM)
    % Routed cleanly using the macro's right-side anchor
    \draw[entanglement, dashed] (prep_right) to[bend left=45] node[right=0.1cm, font=\bfseries, text=orange!90!black] {$\sigma_{P M}$} (meas_right);

    % Background Box for Untrusted Lab
    \begin{scope}[on background layer]
        % Expanded bounding box coordinates to wrap the wider 3D shapes perfectly
        \coordinate (box_top) at (4.0, 3.6);
        \coordinate (box_right) at (6.8, 0); 
        \coordinate (box_bottom) at (4.0, -2.8);
        \coordinate (box_left) at (2.4, 0);
        \node[fill=black!5, draw=black!20, thick, rounded corners=8pt, inner sep=0.8cm, fit=(box_top) (box_bottom) (box_left) (box_right)] (untrusted) {};
      \node[
    above,
    font=\sffamily\large\bfseries,
    color=black!60
] at (untrusted.south)
{Untrusted Device};
    \end{scope}

    % ==========================================
    % 4. EXTRACTOR AND OUTPUT 
    % ==========================================
   % ==========================================
% 4. EXTRACTOR AND OUTPUT 
% ==========================================
\node[extractor] (ext) at (9.5, -0.5) {Extractor};

\node[
    rectangle,
    fill=cyan!10,
    draw=cyan!40,
    thick,
    rounded corners=4pt,
    minimum width=1.8cm,
    minimum height=4.2cm,
    align=center,
    font=\ttfamily\small,
    text=black!85
] (out) at (12.5, -0.5)
{
0\\
1\\
0\\
1\\
0\\
1\\
0\\
1\\[0.2cm]
\vdots\\
0\\
1\\
0
};

% Top label: keep this
\node[
    above=0.2cm of out,
    font=\sffamily\large,
    align=center,
    text=black!90
]
{Nearly-perfect \\
random bits};

% Output brace: move below the blue box
\draw[
    decorate,
    decoration={brace, amplitude=6pt, mirror},
    thick,
    draw=black!80
]
($(out.south west)+(0.15,-0.25)$)
--
($(out.south east)+(-0.15,-0.25)$)
node[midway, below=0.25cm, font=\bfseries]
{Output $\mathbf{O}$};

    % ==========================================
    % 5. ROUTING AND CONNECTIONS
    % ==========================================
    
    % X input from SV to Prep
    \draw[arrow] (sv.east |- prep_left) -- node[above, font=\bfseries] {$\mathbf{X}$} (prep_left);
    
    % Quantum Channel (P down to M)
    \draw[qchannel] (prep_bottom) -- (meas_top);
    \node[constraint] at (4.4, 0.55) {$\text{Energy} \leq \omega$};
    
\node[
    align=center,
    text=blue!70!black,
    anchor=west
] at (1.6,0.5)
{
{\footnotesize\bfseries Quantum}\\
{\footnotesize\bfseries system}\\[1.5mm]
{\normalsize\bfseries $S$}
};
    
    % B output from Meas to Extractor
    \draw[arrow] (meas_right) -- node[above, font=\bfseries] {$\mathbf{B}$} (ext.west |- meas_right);
    
    % Z seed from SV to Extractor
    \draw[arrow] (sv.east |- 0,-4.1) -- node[above, font=\bfseries] {$\mathbf{Z}$} (ext.west |- 0,-4.1);

    % O output from Extractor to Final String
    \draw[arrow] (ext.east) -- node[above, font=\bfseries] {} (ext.east -| out.west);
%$\mathbf{O}$
\end{tikzpicture}
\caption{\justifying
    \textbf{Schematic of our energy-constrained SDI randomness amplification protocol.}  
    A sequence of imperfect random bits is provided by an $\varepsilon$-biased Santha-Vazirani source (left). 
    The sequence is partitioned into a subsequence $\mathbf{X}$, used as input for an untrusted device, 
    and another subsequence $\mathbf{Z}$, used as a seed for a randomness extractor. 
    The untrusted device consists of a preparation module ($\mathbf{P}$), which prepares states  with energy upper bounded by a given value $\omega$,  and a measurement module ($\mathbf{M}$), which produces a string of classical output bits $\mathbf{B}$. 
    %{\color{blue}[in the picture, instead of the formula with the trace, we can write "Energy $\le \omega$"]}   
    The preparation and measurement module, instead, are allowed to be correlated, by sharing an entangled state $\sigma_{PM}$ (dashed line %{\color{blue} [the line is a bit too thin, can we make it thicker and easier to see?]}
    ), which may be correlated with the adversary.  In addition, the adversary may possess  
    classical side information $\lambda$ about the bits in the source. In the end, the extractor compresses $\mathbf{B}$ 
    and $\mathbf{Z}$ to generate a certified, nearly-uniform output string $\mathbf{O}$ (right).  }
\label{fig:sdi_protocol_schematic}
\end{figure*}

\section*{Results}\label{Results}

{\bf Randomness amplification protocol.}   In our protocol which builds upon the energy-constrained SDI framework introduced in \cite{Van2017,Senno2021}, the initial randomness is provided by a single source  of the Santha-Vazirani (SV) type, namely a source that generates sequence of input bits $x_1,x_2,\ldots,x_n$ such that, at every round $i \in \{1,\dots, n\}$,  the probability  of obtaining  the bit value $x_i$, conditional on  the  previous bit values $x_1,\dots ,  x_{i-1}$,  deviates from the   uniform distribution by a bounded amount $\varepsilon$.  Explicitly, the SV condition is  $\frac{1}{2}-\varepsilon \leq \mu_{X_i|X_1,\ldots,X_{i-1}}  (x_i|x_1,\ldots,x_{i-1}) \leq \frac{1}{2}+\varepsilon$, where   $\mu_{X_i|X_1,\ldots,X_{i-1}}$ is the conditional probability distribution of the random variable $X_i$ corresponding to the $i$-th  bit conditionally on the random variables $X_1\cdots X_{i-1}$ corresponding to the first $i-1$  bits. 

The high-level structure of the protocol is illustrated in Fig. \ref{fig:sdi_protocol_schematic}.  Initially, some  of the bits generated by the SV source are used as input for a preparation device, which prepares states of a quantum system $S$.  The specific nature of system $S$ is  not assumed to be known: $S$ could be a photon, an ion, or any other quantum system. Upon receiving the $i$-th input bit, the device prepares the system in a quantum state depending on the bit value. In the honest implementation of the protocol, the state    depends only on  $x_i$, and can be denoted as  $\rho^S_{x_i}$.  In general, the state may depend also on other parameters, including the attacker's side information, as discussed later in  this section. 
%\rho^S_{x_i|x_1\dots x_{i-1}}$ 
 After the state preparation, a measurement device performs a quantum measurement on system $S$, generating an output bit $b_i$.  In the following, we will use the notations $\mathbf{X} := X_1,X_2,\ldots,X_n$ and   $\mathbf{B} := B_1, B_2, \ldots, B_n$ for the input and output bit strings generated in $n$ rounds, respectively.   

Our protocol leverages the following   key assumptions.  (1) Low-energy preparation: at every round, the quantum state produced by the preparation device has energy upper bounded by a known value $\omega < E_1$, where $E_1$ is the energy of the first excited state.  Here  we do not assume any detailed knowledge of the system's Hamiltonian: we only assume that the ground state is unique, so that the condition that the energy is upper bounded by $\omega$ implies that the states produced by the device are in a neighborhood of  the ground state.
%conventionally set its energy to zero, and choose energy units such that the energy of the first excited state is  $\Delta E = 1$.  
%there exists a Hermitian operator $\hat{H}$, characterized by a lowest nondegenerate eigenvalue of $0$ and a unit spectral gap such that for any round $i \leq n$, given the input sequence $\mathbf{x} \in \{0, 1\}^i$ and prior measurement outcomes $\mathbf{b} \in \{0, 1\}^{i-1}$, the state satisfies:
    %$$
   % \text{Tr}[\hat{H} \rho^S_{x_i|x_1\dots x_{i-1},\mathbf{b},\lambda}] \leq \omega_{x_i},
%    $$
 %   where $\rho^S_{x_i|x_1,\dots, x_{i-1},\mathbf{b}}$ denotes the quantum state emitted by the preparation device in round $i$ conditioned on $\mathbf{x} =x_1,\dots, x_i$ and $\mathbf{b} = b_1,\dots, b_{i-1}$. {\color{blue}[ in the figure, we just have a single value $\omega$. Do we really need the dependence on $x$?  Or maybe we can just set $\omega: =\max\{\omega_0, \omega_1\}$ in the main text and leave the more general case of $\omega_0\not=\omega_1$ to the Supplemental Material?  ]}). 
 (2) Source-device independence: the quantum devices do not affect the behaviour of the  SV source in subsequent rounds. (3) Classical side information about the SV source: the adversary (Eve) holds only classical side information  about the SV source, and, conditionally on this information, the source remains of the SV type with bias $\varepsilon$. Explicitly, the condition is $\frac{1}{2}-\varepsilon \leq \mu_{X_i|\Lambda ,X_1,\ldots,X_{i-1}}  (x_i|\lambda , x_1,\ldots,x_{i-1}) \leq \frac{1}{2}+\varepsilon$, where $\Lambda$ is the random variable corresponding to Eve's side information, and $\lambda$ are its possible values.

To amplify the initial randomness, we exploit the  non-classicality of the measurement statistics,  witnessed by a quantity known as the  Measurement-Dependent Local (MDL) score  \cite{Pütz2014, Pütz2016, Senno2021}. The MDL score is computed using the distribution $p_{B_i,X_i}$ of the output bit $B_i$ given the input bit  $X_i$ at the $i$-th   round of the protocol.  Explicitly,  it is given by
\begin{equation}
    I_{\varepsilon}^{\omega}(p_{B,X}) = v_{\varepsilon}^{\omega} \Big( p(0,0) + p(1,1) \Big) - \frac{1}{v_{\varepsilon}^{\omega}} \Big( p(1,0) + p(0,1) \Big) \, ,
\end{equation} 
where we omitted  the subscript  $i$ for the round, and we defined $v_{\varepsilon}^{\omega}:= (1/4 - \varepsilon^2)\omega^2$  (see Supplementary Note 1 for more detail).  
 
The minimum value of the MDL score achievable by classical strategies provides a benchmark for quantum strategies:   any  value of the MDL score below the classical minimum witnesses a non-classical behavior \cite{Pütz2014, Pütz2016}.   In our setting, a classical strategy can be seen as the special case of quantum strategy,  in which the states generated by the preparation device are diagonal in the eigenbasis of system $S$' Hamiltonian. In \cite{Van2017,Senno2021}, it was shown that the minimum value of the MDL score achieved by classical strategies is $
\mathcal{B}_{\rm c} := [v^{\omega}_{\varepsilon} + \frac{1}{v^{\omega}_{\varepsilon}} ]\left(1/2-\varepsilon\right)(1-2\omega ) - \frac{1}{v^{\omega}_{\varepsilon}}$. Furthermore every violation of the classical bound $    I_{\varepsilon}^{\omega}(p_{B,X}) \ge  \mathcal{B}_{\rm c}$ guarantees that the measurement statistics  contains some intrinsic randomness,  which can   be used to amplify the randomness in the original SV source.

        In our protocol,   the value of the MDL score is estimated from the statistics of the output bit string $\bf B$ conditional on the value of the input bit string $\bf X$.    We fix an acceptance threshold $I_{\text{th}}  \le \mathcal{B}_{\rm c}$, and abort if the estimated value of the MDL score is above this threshold. If the estimated value passes this test, we then proceed to the final step: the extraction of randomness from the measurement data and from additional bits generated by the SV source.  In this step,  the bit string $\bf B$ is combined with another bit string $\bf Z$, generated by the SV source, and the two strings together are fed into an extractor, which produces the final output of the protocol: a bit string $\bf O$ that is nearly-uniform and nearly-unpredictable by the eavesdropper. For this purpose, one can use any extractor $\bf Ext$ that is secure against any adversary holding a quantum system that is  correlated with the inputs of the extractor. In our case, we use a Trevisan strong extractor  \cite{Foreman2025,De2012}, which, as proven in  Methods, yields a high-rate extraction even in our energy-constrained setting.

\textbf{Variational Gauss-Radau method for SDI randomness certification.} 
%\textcolor{red}{ In the many crytography tasks, the single-round randomness has a clear meaning: It is the amount of private unpredictability generated in one use of the devices against the strongest Eve allowed by the model. The fundamental single-round security quantity is the conditional min-entropy. It provides the worst-case entropy estimate. Eve can use all of her side information and perform the best possible measurements.
%In this sense, the min-entropy is too conservative. Because a small set of predicted events can substantially reduce the certified entropy. By contrast, the conditional von Neumann entropy provides a more balanced characterization of Eve's uncertainty. It capture eve's average information and more physically transparent.} 
 We now provide a method for  estimating the amount of randomness produced in each individual round of our protocol, and, more generally, in SDI protocols with an energy constraint.  Improving the single-round entropy estimates is important because these estimates feed into the estimate of the number of nearly perfect random bits produced after multiple rounds  of the protocol. Since the number of rounds is large (in our implementation $10^{11}$), even modest improvements in the single-round estimate result in significant enlargements in the length of the output bit string.

% Technically, here we use the Generalized Entropy Accumulation Theorem (GEAT) \cite{Metger2024}, which provides a guarantee on the amount of entropy present in the measurement data generated over $n$ rounds.    Compared to its non-generalized version \cite{Dupuis2020},  the GEAT allows us to treat Eve's side information and Eve's quantum system as dynamical variables, which are updated at every round of the protocol by incorporating the information obtained in all past rounds. These settings fits the multiround structure of our protocol, where we  allow sequential measurements and general quantum side information for $n$ uses of the devices in subsequent rounds. 
 
 \begin{figure*}[t!]
	
	\centering
	\includegraphics[width=1.0\textwidth]{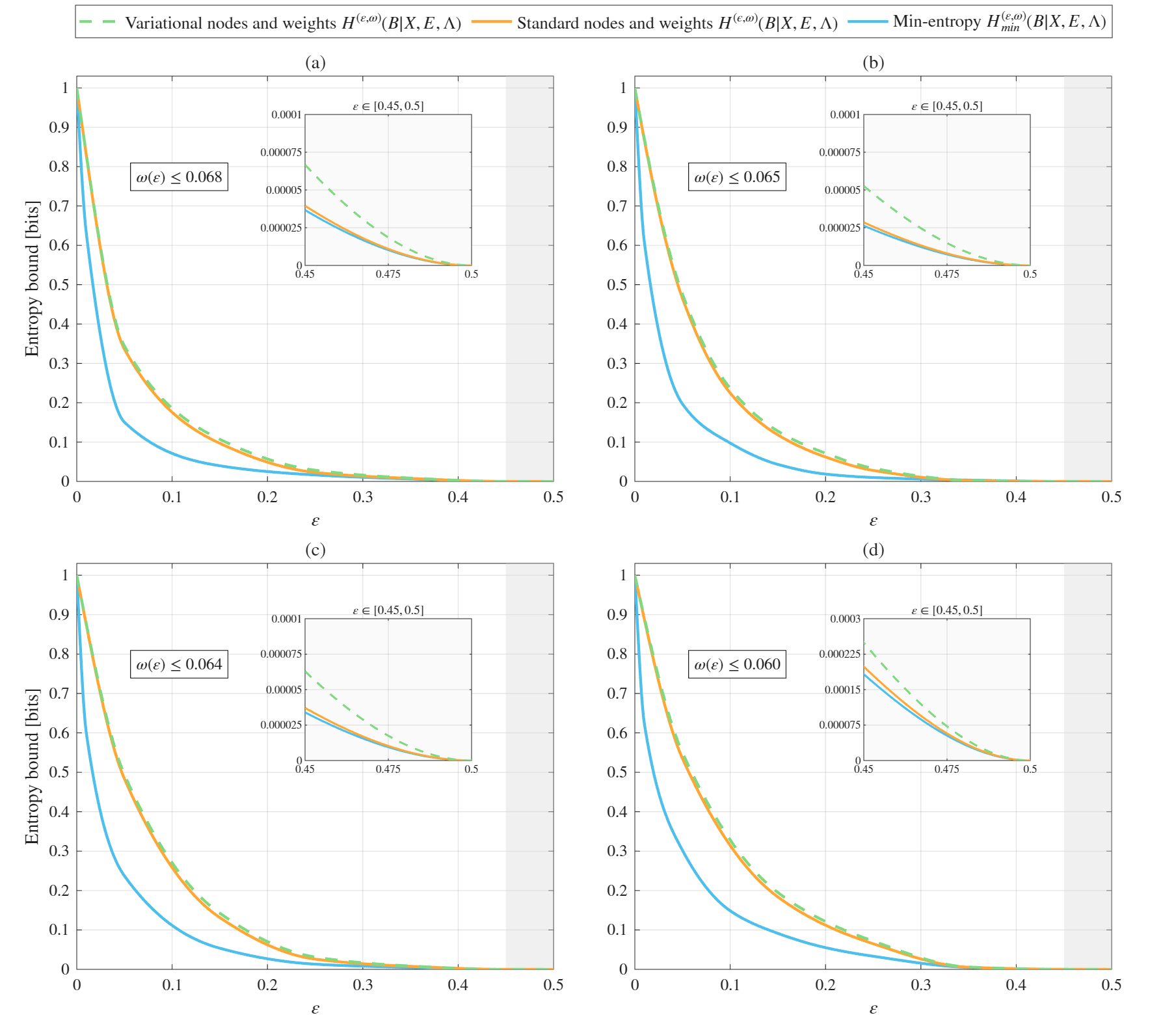}
	
    \caption{\justifying
    \textbf{Certified randomness in the presence of classical correlations between preparation and measurement devices.}       The plots display certified single-round entropy bounds as a function of the Santha-Vazirani source bias $\varepsilon \in [0, 0.5]$. 
    The blue curve represents the standard conditional min-entropy $H^{(\varepsilon,\omega)}_{\min}(B|X,E,\Lambda)$. 
    The orange curve depicts the conditional von Neumann entropy $H^{(\varepsilon,\omega)}(B|X,E,\Lambda)$ computed via standard Gauss-Radau method, 
    while the green dashed curve shows the improved bound using our variational method. 
    Subplots {(a)}--{(d)}, $\text{with suitable choices of parameters in range: } (a)\, \omega(\varepsilon) \le 0.068,\ (b)\, \omega(\varepsilon) \le 0.065,\ (c)\, \omega(\varepsilon) \le 0.064,\ (d)\, \omega(\varepsilon) \leq 0.060$, 
    correspond to varying energy constraints that is related to the experiment data. 
    Insets highlight the high-bias regime ($\varepsilon \to 0.5$), demonstrating that the variational framework yields tighter certification than both the standard node approximation and the min-entropy limit.}
	\label{fig:fig1}
\end{figure*}

To quantify the single-round entropy, we use the  conditional von Neumann entropy $H^{(\varepsilon,\omega)}(B|X,E,\Lambda)$, of the output bit $B$ conditioned on the input bit $X$, on the quantum system $E$ in Eve's control, and on  Eve's side information $\Lambda$ about the SV source.  Informally, the idea is that   Eve tries to  predict the output bit by exploiting her pre-shared entanglement with the devices and her classical side information about the source. In this task,  she is constrained by  conditions 1-3, and by the fact that her operations must  yield the value of the MDL score observed from the data.   

Mathematically,  Eve's optimized strategy arises from a minimization of the conditional von Neumann entropy   over the set of all possible quantum strategies compatible with the physical constraints 1-3 and with the observed statistics.  The challenge in this  constrained minimization, however, is that the entropy includes  a logarithmic   term that is hard to evaluate directly.  The evaluation is typically performed by replacing the logarithmic term by a finite sum of linear fractional functions, using the so-called  Gauss-Radau method \cite{Brown2024}.  This method builds on the integral representation of the log function, and turns it into a discrete sum,  evaluating the integrand at a fixed set of   sample points (called nodes) determined by the roots of a Legendre polynomial.  The values of the integrand at the nodes are then multiplied by suitable scaling factors (called weights).
More details are  provided in  Supplementary Note 2. 
 
 The  problem with the standard Gauss-Radau method is that it adopts a fixed choice of nodes, whose locations are uniquely determined  once and for all as soon as the total number of nodes is fixed.   This approach  works well in the fully  DI setting, where there is no assumption on the internal functioning of the devices.    However, it tends to provide loose bounds  in the SDI  scenario, where the constraint over the energy has major consequences on the evaluation of the integral.   In our scenario, the condition that the energy of the prepared states is less than the energy of  the first excited state, implies that the prepared states  are close to the ground state.    As a consequence, these states are nearly pure, and can have small, near-zero eigenvalues.  Since the log function has a singularity at zero,  the overall estimation of the entropy becomes very sensitive to  these small eigenvalues.    However, the standard Gauss-Radau method fails to place  sufficiently many  integration points near the origin to efficiently resolve the logarithmic singularity, and therefore results in unstable extrapolations, yielding relatively loose bounds on the entropy.  In the standard method,    the only way to improve the accuracy of the  estimation is to increase the total number of nodes, which makes  numerical evaluation computationally  challenging.

 To address the above problems, we introduce an upgraded version of the Gauss-Radau method. Our framework is applicable to the evaluation of the von Neumann entropy in the energy-constrained  scenario, and, more generally,  to other nonlinear  functions in the presence of a constraint on the expectation value of a given observable. The idea is to adjust the distribution of the nodes, making them more dense   in the regions of the spectrum that match the physical constraints in the SDI scenario, and  assigning higher weights to the nodes that  contribute more significantly to the estimation of the function of interest (in our specific problem, the conditional von Neumann entropy). 

Technically, the flexible choice of integration nodes is achieved by extending the original approach, based on Legendre polynomials,  to a broader family of polynomials, known as shifted Jacobi polynomials \cite{Szegő1939}.  The shifted Jacobi polynomials form a 2-parameter family, specified by  parameters $\alpha$ and $\beta$ in the range $(-1,1]$, and reduces to the Legendre polynomials when $\alpha  = \beta = 0$.   By treating $\alpha$  and $\beta$ as variational parameters,  we can then optimize the position of the integration nodes for increased precision in the near-zero region  where the logarithm function becomes more sensitive.   The details of this approach are provided in Methods and Supplementary Note 2.

%In our extended Gauss-Radau framework, the original Legendre polynomials are included as the special case in which both parameters $\alpha$ and $\beta$ are set to zero.  This fact suggests that, by optimizing the choice of parameters, one can obtain better bounds on the entropy. 
In the Methods section, we show that in the low-energy regime ($\omega \to 0$),  our variational method yields strictly tighter entropy bounds than the standard Gauss-Radau approach \cite{Brown2024}.  Moreover, we show that  increasing  the number of nodes and adjusting the variational weights in further iterations monotonically improves the quality of the approximation. Furthermore, in Supplementary Note 3,  we develop  a semidefinite programming (SDP) method for  minimizing the  conditional entropy over the prepare-and-measure realizations compatible with the observed data and the given energy constraint.

The comparison between our variational method and the standard Gauss-Radau method  is presented in Fig.~\ref{fig:fig1}. 
%The figure presents three lower bounds on the single-round conditional von Neumann entropy  as a function of  the SV source bias $\varepsilon$, for fixed values of the energy bound $\omega$. 
%The numerical results illustrate the advantage of our variational method over  existing techniques.  
  For ease of comparison, here we make the standard assumption that the preparation and measurement devices can only share classical correlations  \cite{Senno2021}. This assumption is lifted in Supplementary Note 4, where we show how to extend our variational Gauss-Radau method to the scenario where arbitrary entangled states can serve as inputs of the preparation and measurement devices.
  
  The  orange curve and green dashed curve in Fig.~\ref{fig:fig1}  are the bounds obtained from the standard Gauss-Radau method and from our variational method, respectively. Comparing them, one can observe improvements of the order of $10^{-5}$ to $10^{-2}$, which lead to improvements of the order of $10^6$ to $10^9$ when accumulated over $10^{11}$ rounds.  It is also important to observe that our method yields significant improvements in the high-bias regime ($\varepsilon \to  1/2$), when the bits produced by the SV source are almost completely predictable for Eve.  The improvements are  illustrated in the insets in the figure; for example, they  lead to a  $25\%$ increase (on average) of the estimated entropy for $\varepsilon  =  0.475$.  Finally, we note that the advantage of our method becomes increasingly prominent in the low-energy regime ($\omega\to 0$), as the eigenvalues of the quantum states satisfying the energy bound cluster near zero, thereby highlighting the benefit of  our variational choice of nodes.  

In addition to the comparison between our method and the standard Gauss-Radau method, Fig.~\ref{fig:fig1} also shows a standard bound on the conditional min-entropy (solid blue curve). Since the min-entropy  is a lower bound to the  von Neumann entropy,  this bound provides a benchmark for the bounds computed with Gauss-Radau methods.

\smallskip

\begin{figure*}[t!]
	
	\centering
	\includegraphics[width=1.0\textwidth]{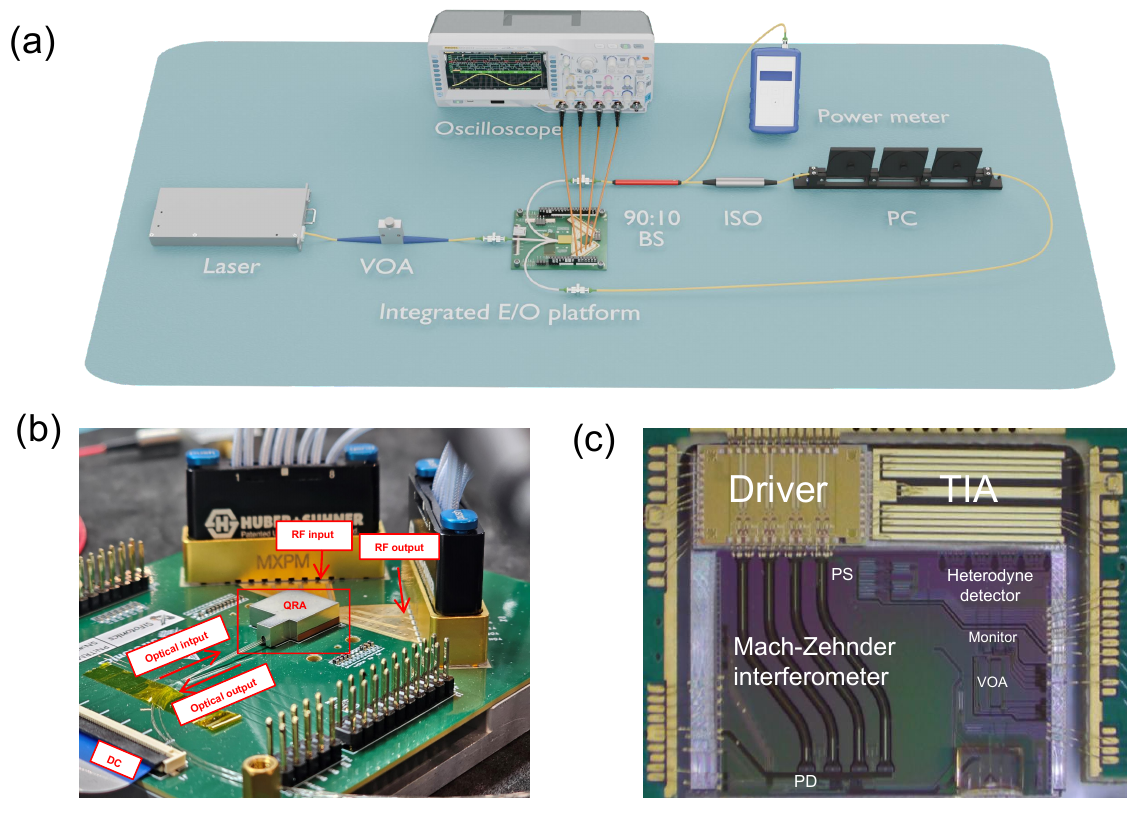}
\caption{\justifying
\textbf{Experimental SDI randomness amplification.}
{\em (a) Schematics of the setup.} The light emitted by 1550 nm laser is attenuated by a variable optical attenuator (VOA).  The attenuated light then enters into an integrated electronic/optical (E/O) platform,  which produces the coherent states used by our protocol. The coherent states exit the E/O platform  through the optical output port (upper link in the picture) and undergo a test to verify that they satify the required energy bound. They go through a 90:10 beam splitter (BS), and the  smaller fraction of the light is directed to a power meter (top right), while the larger fraction eventually undergoes a heterodyne measurement. Before measurement, the light goes through an optical isolator (ISO), which prevents backscattered photons from interfering with the energy monitoring, and  through a polarization controller (PC), which adjusts the polarization  to maximize the interference.  The output of the PC is then sent back inside the E/O platform (lower link in the picture) for heterodyne detection. Finally, the output photocurrent exiting the E/O platform is sent to an oscilloscope for readout of the classical outcome. 
{\em (b) Photograph of the E/O platform.} The platform consists of  electronic circuits and a photonic chip, where the main quantum operations take place. Both the electronics and optics are powered by a DC current (bottom left).  The photonic chip  receives its quantum input from (transmits its quantum output through) optical fibers, labelled as ``optical input''  (``optical output'') in the figure. Along with the quantum input, the photonic chip receives a classical input from a radiofrequency signal (RF input) controlling the phase of the quantum state.  The chip also produces a classical output (RF output) from heterodyne detection.  
{\em (c) Optical microscope image of the  photonic chip.}  The chip performs two main tasks: state preparation (left part of the figure) and measurement (right part). The state preparation module is controlled by  an external RF signal, which amplified to a high-voltage by a driver (top left).   An initial phase is fixed by a thermo-optical phase shifter (PS).  Depending on the RF signal, a $0/\pi$ phase is then added by  a Mach-Zehnder interferometer, made of four waveguides, which are also used to further attenuate the amplitude of the coherent states. Finally, the energy of the quantum state is monitored by  photodetector (PD) receiving the lower power output of a 1:99 beamsplitter (positioned under the PD and not visible in the picture) and the measured value is used to stabilize the working point of the Mach-Zehnder interferometer.     The state measurement  module starts  with a  variable optical attenuator (VOA), which adjusts the power of the incoming beam, and a power monitor, which ensures that the power is within the working space of heterodyne detection. Then, the quantum state undergoes heterodyne detection and the resulting photocurrent is amplified by a transimpedance amplifier (TIA) for readout of the classical outcome.  }

	\label{fig:setup}
\end{figure*}
\textbf{On-chip implementation.}  We now report the experimental  implementation of our protocol on a photonic setup,  which works at room temperature in the standard 1550 nm telecom band~\cite{Asakawa2020,AghaeeRad2025} and is therefore  suitable for large-scale deployment in  existing telecom architectures~\cite{Clark2026}.      The setup includes  a photonic chip, where the main quantum operations (state preparation and measurement) take place. The chip  measures $11\,\mathrm{mm}\times 11\,\mathrm{mm}\times 3\,\mathrm{mm}$, and is  sealed inside a protective casing for vibration-resistant, interference-resistant, and stable long-term operation.

   The schematic of the overall setup is illustrated   in Fig.~\ref{fig:setup}(a).  
A  narrow-linewidth (2 kHz)  1550 nm laser   emits a continuous-wave, which then passes through a variable optical attenuator (VOA) to set the optical power required for preparing energy-constrained coherent states. The attenuated light then enters into an integrated electronic/optical (E/O) platform.  The E/O platform further reduces the power of the light, and, depending on the value of the input bit $x\in\{0,1\}$ from the SV source, modulates its phase, producing one of two coherent states  $|\psi_x\rangle=\left|(-1)^x\alpha\right\rangle$, where $\alpha  >0$ is a  coherent-state amplitude satisfying the energy bound $\alpha^2\le \omega$ with $\omega = 0.0185$.
  Afterwards, the modulated coherent states  exit the E/O platform through an optical output port (upper link in the picture) and their energy is monitored to verify that they satisfy the required energy bound. The energy monitoring is performed with  a 90:10 beam splitter (BS). The  smaller fraction of the light is directed to a power meter (top right of the figure) for real-time energy measurement, while the larger fraction is sent back into the platform, where it undergoes heterodyne detection to determine the value of the position quadrature.   The classical readout takes place at an oscilloscope outside the E/O platform. Finally, the sign of the position quadrature determines the  value of the output bit $b$. A photo of the E/O platform  is shown in Fig.~\ref{fig:setup}(b). The platform includes a photonic chip for quantum state preparation and measurement, shown in  in Fig.~\ref{fig:setup}(c).      
%The heterodyne setup has eight output ports, which provide a differential photocurrent, sent to an external oscilloscope to  readout the value of the position quadrature. 
 More details on the  components of our setup are provided in  Supplementary Material Note 7.

\begin{figure}[t!]

	\makebox[\columnwidth][c]{%
		\includegraphics[width=1.25\columnwidth]{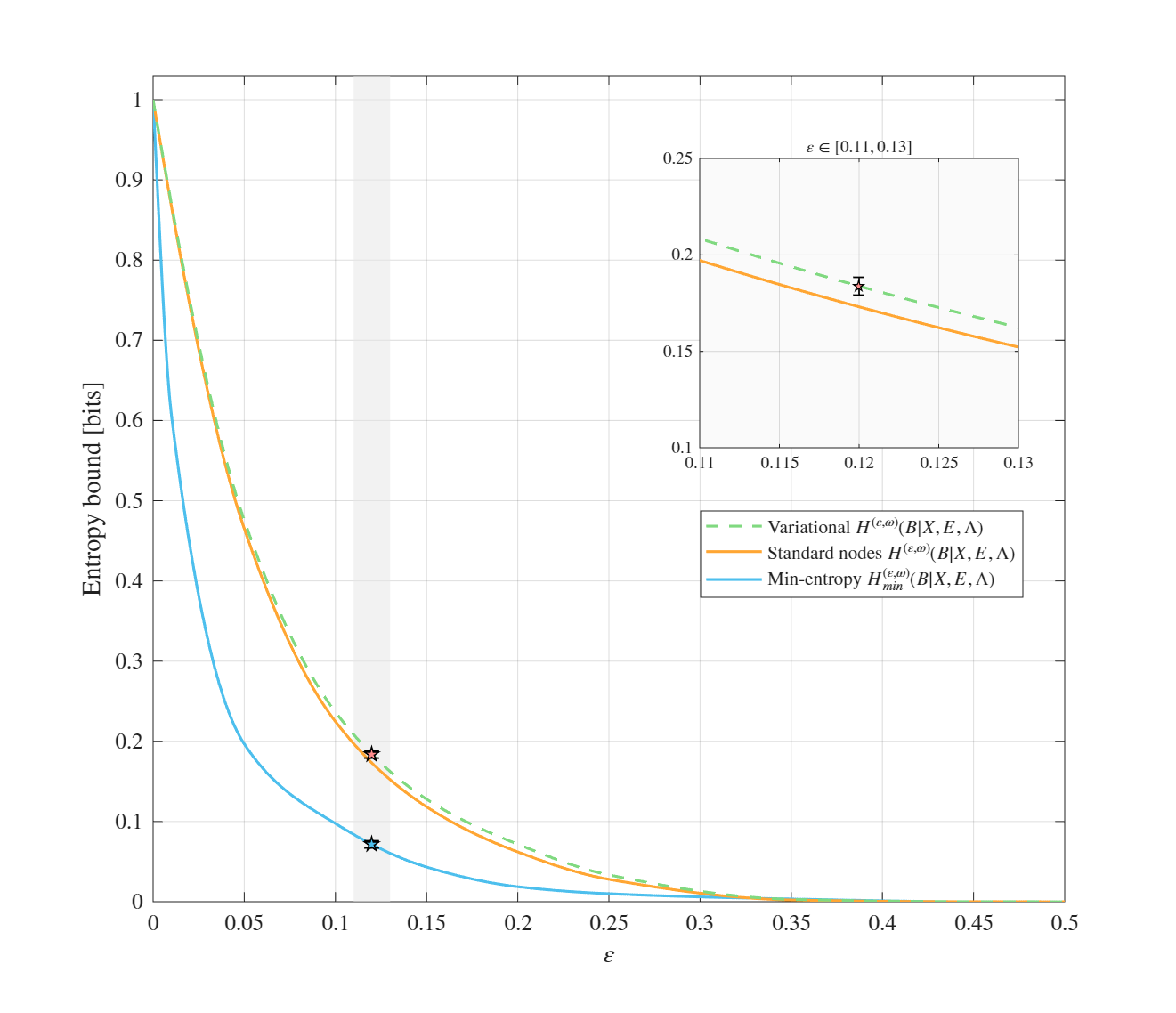}
	}
    \caption{\justifying \textbf{Experimentally certified entropies.}  
    The  three curves show  three lower bounds on the von Neumann entropy as a function of the SV source bias under  the experimental parameters of our setup. Specifically, the bounds are obtained  from our variational Gauss-Radau method (green dashed line), from the standard node Gauss-Radau method (orange solid line), and from the min-entropy baseline (blue solid line). In both panels, the insets provide a magnified view of the low-entropy region $\varepsilon \in [0.11, 0.13]$. The star markers show the certified values for the representative value $\varepsilon = 0.12$.  }
    %add error bar
	\label{fig:fig21}
\end{figure}

%Physically, unpredictability arises from the indistinguishability of low-energy coherent states approaching ground state(vacuum state). The SV source selects these indistinguishable quantum states, enabling true randomness generation and amplification.}\\

\indent  In Supplementary Note 8, we  apply our variational Gauss-Radau method to perform entropy certification under the experimental conditions characterizing our setup, namely $\omega  \approx  0.0185$,    $p(0,0) \approx 0.163$,    $p(0,1) \approx 0.342$,  $p(1,0) \approx 0.326$, and $p(1,1) \approx 0.169 $.   The results of the entropy certification are illustrated in Fig.~\ref{fig:fig21}, which shows the curves corresponding to three different lower bounds on the conditional von Neumann entropy as a function of the SV source bias $\varepsilon$. The figure shows that a non-zero  entropy can be certified for biases up to  $\varepsilon \leq 0.3$.   The star markers in the figure highlight the values for $\varepsilon =0.12$, which is approximately in the middle of the certifiable range.   Compared with the current state of the art for DI protocols, which tolerate biases up to   $\varepsilon \leq 0.0075$  \cite{Kulikov2024}, the  $\varepsilon = 0.12$ already represents a substantial improvement offered by the SDI approach, demonstrating its potential for  high-rate entropy generation within a  photonic chip.

Based on the experimentally certified entropy at $\varepsilon=0.12$, the on-chip implementation operating at a $1\,\mathrm{GHz}$ modulation rate yields a peak real-time certified randomness generation rate of approximately $1.8\times10^{8}\,\mathrm{bits/s}$.   Overall, this rate results in an estimated randomness amplification throughput of about $20~\mathrm{Mbps}$ (see Supplementary Note 8 for more details).

\section*{Discussion}
This work establishes a  route toward on-chip high-rate  SDI randomness amplification.  The observed enhancements in the amount of certified randomness  do not arise from isolated device improvements, but rather from a combination of device engineering  and  entropy certification techniques,  provided by the variational Gauss-Radau framework. Our demonstration paves the way to the realization of chip-size devices for SDI randomness amplification, which can be achieved with standard fabrication techniques by integrating the external elements of our setup into the chip design. In particular,  the electronic circuitry of the E/O platform can be fabricated on a monolithic chip with the  photonic elements.  Moreover, the core functions of external instruments such as the RF signal generator 
%, based on a digital analog converter (DAC), 
and the oscilloscope for outcome readout
%based on an analog digital converter (ADC),
can also be integrated into a single chip,  eliminating the need for  external equipment and enabling deployment in mobile devices like smartphones. 

%The key challenge in integrating both high-speed DAC and ADC lies in synchronizing the two, thereby eliminating the  uncertainty caused by cable delays in discrete setups. This challenge can  be addressed by including a shared clock for accurate timing. 

\indent Looking ahead to future applications, it  is useful to  point out some potential directions for improvement.  On the hardware side, fabrication variability and thermal drift
%, and finite data-size effects  
pose a limit to the performance of randomness amplification over short acquisition times. These issues can be addressed by enhanced fabrication procedures and by incorporating temperature stabilization into the chip design. 
%A first direction of improvement is that entropy certification relies on accurate spectral information, while the outcomes of the heterodyne measurement  may deteriorate under strong noise. 
%or high-dimensional spectra. %Second, side information about the quantum measurement should be further considered, as  it can impact the certified randomness. 
On the theory side, our analysis could be further extended to  block min-entropy sources or arbitrary min-entropy sources beyond the SV assumption.  All together, these improvements are expected to enable a viable technology of high-security, high-rate randomness generation for a broad range of applications in existing telecom infrastructures.   

%Overall, by combining variational entropy linearization with chip-scale photonic integration, this work demonstrates that SDI quantum randomness amplification is no longer constrained by loose entropy bounds or by physical assumptions that were previously unjustified or ad hoc. Instead, it establishes a unified framework in which tight certification and integrated implementation mutually reinforce each other, opening a realistic pathway toward high-speed, scalable quantum randomness sources for quantum-secure infrastructures.

\section*{Methods}

\textbf{Variational surrogates for spectral functions.} 
We introduce the following general framework to replace nonlinear spectral functions \(\Phi(A)\) by finite-dimensional surrogates, where self-adjoint operator \(A=\sum_i x_i \pi_i\) and \(\Phi(A)=\sum_i \Phi(x_i)\pi_i\) through spectral decomposition. The purpose is to approximate the target scalar function \(\Phi(x)\) on the eigenvalue domain and also to preserve the operator inequality required by the security proof.

Let \(\Phi:\mathcal D\to\mathbb R\) be a spectral function with a Stieltjes-type representation \cite{Davis1984}
\begin{equation}
    \Phi(x) = \int_{\Gamma} K(x,\tau)\,d\mu(\tau),
\end{equation}
where \(\mathcal D\) is the relevant spectral interval containing the eigenvalues \(x_i\) of feasible operators \(A\), \(K(x,\tau)\) being a positive kernel, and \(d\mu(\tau)\) is a non-negative measure. To construct computable approximation, we consider positive quadrature surrogates for each order \(m\)
\begin{equation}
    \mathcal S_m(K) = \left\{ S_m(x) =  \sum_{j=1}^m \gamma_j K(x,\tau_j)
    \;\Bigg|\; \gamma_j\ge0,\ \tau_j\in\Gamma \right\},
    \label{eq:surrogate_cone}
\end{equation}
where the nodes \(\tau_j\) and weights \(\gamma_j\) are chosen so that $\|S_m-\Phi\|_{L^\infty(\mathcal D)} \rightarrow 0, \; m\to\infty$. Under the standard assumptions for generalized Gaussian quadrature, one obtains a sequence \(S_m\in\mathcal S_m(K)\) that converges uniformly to \(\Phi\) on the relevant spectral interval and satisfies the one-sided domination \(S_m(A)\preceq \Phi(A)\). Increasing the quadrature order or adapting the nodes can tighten the approximation.
% for every feasible self-adjoint operator \(A\).

The next step is to choose the nodes and weights so that the surrogate with finite-order $m$ is tight on the relevant spectral region. We implement this by selecting the quadrature rule that reduces the accumulated approximation error, written as \(\inf_{u(\cdot)}\mathcal C[y,u]\) (see Supplementary Note 2).  We tune the underlying quadrature measure to change the associated quadrature nodes and weights, where this measure is written as \(d\mu(\tau)=u(\tau)d\tau\), and \(u(\tau)\ge0\) is the measure density. Therefore  the construction is formulated as a measure-selection problem. The accumulated error is written as
\begin{equation}
\mathcal{C}[y,u] =
\int_{\tau_0}^{\tau_f} \mathcal{L}(\tau,y(\tau),u(\tau),y'(\tau))\,d\tau +
\Psi(y(\tau_f)),
\label{eq:method_cost}
\end{equation}
where \(\mathcal L\) measures the local approximation error and \(\Psi\) imposes the endpoint condition. Here \([\tau_0,\tau_f]\) is the interval on which the quadrature rule is built, \(u(\tau)\) plays the role of the tunable measure density, and \(y(\tau)\) records the approximation object generated by this choice of measure \cite{Warga1972}. 

To turn the selected measure into a concrete surrogate defined in Eq. \eqref{eq:surrogate_cone}, we solve the resulting finite-dimensional approximation problem in the weighted polynomial basis. The weighted coefficient solution of \(\gamma\) is denoted by \(q\), which is not fixed uniquely by the linear equations alone, due to the corresponding linearized system having a nontrivial kernel. We remove this freedom related to the kernel by imposing additional linear constraints \(\{\Lambda_i(q)=d_i\}_{i=1}^{\ell}\), where \(\Lambda_i\) denotes the \(i\)-th constraint functional and \(\mathbf d=(d_1,\ldots,d_\ell)\) is the prescribed vector of normalization or compatibility values. If \(q_p\) is one particular solution and \(\{\alpha_j\}_{j=1}^{\ell}\) spans the kernel, then the corrected solution is
\begin{equation}
q^* = q_p + \sum_{j=1}^{\ell}
\left( \mathbf M^{-1}
(\mathbf d-\boldsymbol{\Lambda}(q_p))
\right)_j \alpha_j,
\end{equation}
where matrix \(\mathbf M\) has elements \(\mathbf M_{ij}=\Lambda_i(\alpha_j)\). Thus the imposed constraints select the element satisfying \(\Lambda_i(q^*)=d_i\) for all \(i\). Usually, this is implemented by a Galerkin expansion \cite{Canuto2006} (see Supplementary Note 2). The resulting \(q^*\) is used to construct the final surrogate \(S_m\).

Furthermore, the quadrature surrogate construction can be lifted from a scalar spectral approximation to a more general variational objective. In the quadrature cone in Eq. (3), let \(r_m\in\mathcal S_m(K)\) be the selected order-\(m\) surrogate for the target spectral function \(\Phi(A)\). For a lower bound, we require
\(r_m(A)\preceq \Phi(A)\) for every feasible self-adjoint operator \(A\). Thus \(r_m\) can replace \(\Phi\) inside any order-preserving variational objective without changing the direction of the bound. In particular, if the variational objective \(F_\Phi(\theta) = L(\theta)+\sum_i c_i\Omega_i(\Phi(A_i(\theta)))\),
with \(c_i\ge0\), and each \(\Omega_i\) is order-preserving, then
\begin{equation}
\begin{split}
    F_{r_m}(\theta)
    &= L(\theta) + \sum_i c_i
    \Omega_i\!\left(r_m(A_i(\theta))\right)                         \\
    &\le L(\theta) + \sum_i c_i
    \Omega_i\!\left(\Phi(A_i(\theta))\right) = F_\Phi(\theta).
\end{split}
\end{equation}

\textbf{Shifted-Jacobi-weighted variational surrogates.} 
 We now specialize the variational measure-selection step to a family of two-parameter shifted Jacobi polynomials used in this work. The aim is to flexibly capture the spectral distribution near endpoints of the interval while reducing the complexity of optimizing over general measures. This corresponds to choosing the density \(u(\tau)=W_{\alpha,\beta}(\tau)\) on the spectral interval \([\tau_0,\tau_f]=[0,1]\), where $W_{\alpha,\beta}(\tau) = (1-\tau)^\alpha\tau^\beta, \; \alpha,\beta>-1.$ The shifted Jacobi polynomials \(J_{(\alpha,\beta)}^{\,n}\) are orthogonal with respect to this weight, and tuning \((\alpha,\beta)\) changes the quadrature nodes. To obtain a Radau rule with one fixed endpoint at \(\tau=1\), we define the degree-\(N\) node-generating polynomial
\begin{equation}
\begin{split}
    Q_N(\tau)
    &= \frac{
    J_{(\alpha,\beta)}^{N+1}(\tau) + c_NJ_{(\alpha,\beta)}^N(\tau)}
    {\tau-1}, \;
    c_N = - \frac{J_{(\alpha,\beta)}^{N+1}(1)} {J_{(\alpha,\beta)}^N(1)} .
\end{split}
\label{eq:method_jacobi_radau_nodes}
\end{equation}
where \(\int_0^1 J_{(\alpha,\beta)}^n(\tau)J_{(\alpha,\beta)}^m(\tau)
W_{\alpha,\beta}(\tau)\,d\tau = \gamma_n^{(\alpha,\beta)}\delta_{mn}\). The coefficient \(c_N\) makes the numerator vanish at the pinned endpoint, so \(Q_N\) is a polynomial of degree \(N\). Its zeros define the interior nodes, while \(\tau_f=1\) is kept as the fixed Radau node. For every choice of \((\alpha,\beta)\), the resulting quadrature rule has positive weights and achieves exactness for all polynomial components up to degree \(2N\)(see Supplementary Note 2). If \(\{\tau_j\}_{j=1}^N\) are the zeros of \(Q_N\), then the weights \(\{\gamma_j\}_{j=0}^N\) satisfy \(\int_0^1 \phi(\tau)W_{\alpha,\beta}(\tau)\,d\tau = \sum_{j=0}^N\gamma_j\phi(\tau_j)\) for all
\(\phi\in\mathbb P_{2N}\), with
\begin{equation}
\begin{split}
    \gamma_0
    &=
    \frac{1}{Q_N(1)}
    \int_0^1 Q_N(\tau)W_{\alpha,\beta}(\tau)\,d\tau,                      \\
    \gamma_j
    &=
    \frac{1}{1-\tau_j}
    \frac{k_{N+1}^{(\alpha,\beta)}}{k_N^{(\alpha,\beta)}}
    \frac{\|Q_{N-1}\|^2_{\widetilde W_{\alpha,\beta}}}
    {Q_{N-1}(\tau_j)Q_N'(\tau_j)}.
\end{split}
\label{eq:method_jacobi_radau_weights}
\end{equation}
Here $1\le j\le N$, \(\widetilde W_{\alpha,\beta}=(1-\tau)W_{\alpha,\beta}\), and \(k_N^{(\alpha,\beta)}\) is the leading coefficient of \(J_{(\alpha,\beta)}^{\,N}\). The general control variable \(u(\tau)\) is therefore reduced to the tunable parameters \((\alpha,\beta)\). Varying these parameters moves the quadrature nodes in a controlled way. 
% while preserving positivity, polynomial exactness, and the one-sided surrogate structure needed for the entropy bound.

% \textcolor{blue}{add parts about the variable-node shifted jacobi weighted GR quadrature framework here, to make the connection of the lemma and the upper general framework more tight.}

For the entropy bounds considered in this work, the target spectral function is the logarithm. We use surrogates obtained from shifted Jacobi-weighted Gauss-Radau quadrature, verifying their admissibility and comparing the certified bounds obtained from different quadrature measures (see Supplementary Note 2).

\begin{lem}
\label{lem:jacobi_bound}
Let $d\mu_{\alpha,\beta}(\tau) = (1-\tau)^\alpha \tau^\beta\,d\tau$ be a shifted Jacobi measure on \([0,1]\), with \(\alpha,\beta>-1\). Let \(r^{\,m}_{\,\alpha,\beta}(x)\) be the order-\(m\) rational approximant to \(\ln x\) obtained from Gauss--Radau quadrature with fixed node
\(\tau_m=1\). Then, for every \(x\in(0,1)\), $r^{\,m}_{\,\alpha,\beta}(x)\le \ln x .$
\end{lem}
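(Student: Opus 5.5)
The plan is to avoid any smoothness-based error estimate and instead use the Hermite-interpolation (divided-difference) form of the Gauss--Radau remainder. The sign of that remainder can be read off directly for Cauchy--Stieltjes kernels, for every admissible $(\alpha,\beta)$.

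\textbf{Setup.} Following the surrogate framework of the Methods, I write the target function as $\ln x=\int_0^1 K(x,\tau)\,d\mu_{\alpha,\beta}(\tau)$. For $\alpha=\beta=0$ this is the identity $\ln x=\int_0^1 \frac{x-1}{1+\tau(x-1)}\,d\tau$. For $x\in(0,1)$ put $c:=1/(1-x)>1$. Then the relevant kernel, as a function of $\tau$, has the form
\[
f_x(\tau)=-\frac{a}{c-\tau},\qquad a>0,
\]
possibly multiplied by a positive factor that is itself of Stieltjes type in $\tau$. The approximant is
\[
r^{\,m}_{\alpha,\beta}(x)=\sum_{j=0}^{N}\gamma_j f_x(\tau_j),
\]
where the $\tau_j$ are the zeros of $Q_N$ together with $\tau_0=1$, and the $\gamma_j$ are as in Eq.~\eqref{eq:method_jacobi_radau_weights}. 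The positivity of the weights and the exactness on $\mathbb P_{2N}$ are taken from the preceding discussion.

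\textbf{Error representation.} Let $p$ be the Hermite interpolant of $f_x$ of degree $2N$, matching values and first derivatives at the interior nodes and the value at $\tau=1$. Since the rule is exact on $\mathbb P_{2N}$ and $p$ agrees with $f_x$ at all nodes, the quadrature error is the integral of $f_x-p$. By the Newton form of Hermite interpolation,
\[
f_x(\tau)-p(\tau)=f_x[\tau_1,\tau_1,\dots,\tau_N,\tau_N,1,\tau]\,(\tau-1)\,Q_N(\tau)^2/\ell_N^2,
\]
where $\ell_N$ is the leading coefficient of $Q_N$. Hence
\[
\ln x-r^{\,m}_{\alpha,\beta}(x)=\frac{1}{\ell_N^2}\int_0^1 f_x[\tau_1,\tau_1,\dots,\tau_N,\tau_N,1,\tau]\,(\tau-1)\,Q_N(\tau)^2\,d\mu_{\alpha,\beta}(\tau).
\]
This holds for any positive measure. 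That generality is precisely why the argument covers all $\alpha,\beta>-1$ and not only the Legendre case.

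\textbf{Sign of the divided difference.} For $g(\tau)=1/(c-\tau)$ with $c$ outside $[0,1]$, induction on the recursion for divided differences gives
\[
g[z_0,\dots,z_k]=\prod_{i=0}^k (c-z_i)^{-1}.
\]
Repeated points are handled by continuity. Because $c>1\ge z_i$, this quantity is strictly positive. Therefore $f_x[\cdots]=-a\prod_i(c-z_i)^{-1}<0$. In the integrand, $(\tau-1)\le 0$ on $[0,1]$ and $Q_N^2\,d\mu_{\alpha,\beta}\ge0$. The integrand is thus a product of two nonpositive factors and a nonnegative measure, so it is nonnegative. This gives $\ln x-r^{\,m}_{\alpha,\beta}(x)\ge0$, which is the claim.

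\textbf{Main obstacle.} The delicate step is the first one: pinning down exactly how $\ln x$ is represented against the Jacobi measure $d\mu_{\alpha,\beta}$. The sign argument needs the integrand, viewed as a function of $\tau$, to be a positive combination of terms $-1/(c-\tau)$ with poles outside $[0,1]$. The naive choice of dividing the Legendre integrand by $W_{\alpha,\beta}$ would destroy this structure.

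I would therefore first check that the paper's kernel $K(x,\tau)$ is of Cauchy--Stieltjes type in $\tau$, either directly or after the change of variables used in Supplementary Note 2. If it is only of the form $-h(\tau)/(c-\tau)$ with $h$ a polynomial, I would split off the polynomial part, which is integrated exactly by the rule, and apply the divided-difference argument to the remainder. Once that representation is fixed, the remaining steps are routine.
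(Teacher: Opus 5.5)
Your sign argument is the paper's. The paper writes the Radau remainder in derivative form,
\[
\frac{g_x^{(2m-1)}(\xi)}{(2m-1)!}\int_0^1\phi_{m-1}(\tau)^2(\tau-1)\,d\mu_{\alpha,\beta}(\tau),
\]
and uses $g_x^{(2m-1)}<0$. Your Hermite/divided-difference version, with $g[z_0,\dots,z_k]=\prod_i(c-z_i)^{-1}$, gives the same sign pointwise and avoids the integral mean-value step. Since $g_x(\tau)=\frac{x-1}{1+\tau(x-1)}=-\frac{1}{c-\tau}$ (so $a=1$), your argument is a complete proof when $\mu$ is Lebesgue measure, i.e.\ for $\alpha=\beta=0$.

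The step you postpone is a genuine gap, and it is not a routine check. Your error formula does hold for any positive measure, but it controls $\int_0^1 f_x\,d\mu_{\alpha,\beta}$. It therefore says nothing about $\ln x$ until you exhibit $\ln x=\int_0^1 f_x\,d\mu_{\alpha,\beta}$ with $f_x$ having negative $(2N+1)$-st divided differences, and you do not. The paper's proof does not supply this either. It writes $\ln x=\int_0^1 g_x(\tau)\,d\tau$ with Lebesgue measure, then applies the $\mu_{\alpha,\beta}$-weighted remainder to the same $g_x$. What it actually establishes is $r^m_{\alpha,\beta}(x)\le\int_0^1 g_x\,d\mu_{\alpha,\beta}$, which is not $\ln x$ when $(\alpha,\beta)\neq(0,0)$.

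With the Jacobi--Radau rule and weights displayed in the Methods applied to $g_x$, the claimed inequality is in fact false:
\begin{itemize}
\item For $m=1$ the only node is $\tau=1$, with weight $\int_0^1 d\mu_{\alpha,\beta}$. For $\alpha=\beta=1$ this gives $r(x)=\tfrac16(1-1/x)$, so $r(\tfrac12)=-\tfrac16>\ln\tfrac12$.
\item For all sufficiently large $m$, the rule converges to $\int_0^1 g_x(\tau)\,\tau(1-\tau)\,d\tau\ge\tfrac14\ln x>\ln x$.
\end{itemize}
The other natural reading, applying the rule to $g_x/W_{\alpha,\beta}$, breaks at the fixed node when $\alpha>0$, since $W_{\alpha,\beta}(1)=0$. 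It also destroys both endpoint smoothness and the definite sign of the high-order divided differences (through the factors $\tau^{-\beta}$ and $(1-\tau)^{-\alpha}$), as you suspected.

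What is missing is a definition of $r^m_{\alpha,\beta}$ for which your sign argument actually applies, equivalently a Stieltjes-type representation of $\ln x$ against $d\mu_{\alpha,\beta}$ with the right sign structure. Without it, your proposal, like the paper's argument, proves the lemma only in the Legendre case, or in the weaker form $r^m_{\alpha,\beta}(x)\le\int_0^1 g_x\,d\mu_{\alpha,\beta}$.
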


\begin{proof}
Write $g_x(\tau) = \frac{x-1}{1+\tau(x-1)}, \; \ln x = \int_0^1 g_x(\tau)\,d\tau .$
The Gauss-Radau remainder for the Jacobi-weighted rule has the form
\begin{equation}
\ln(x)-r^{\,m}_{\,\alpha,\beta}(x)
= \frac{g_x^{(2m-1)}(\xi)}{(2m-1)!}
\int_0^1 \phi_{m-1}(\tau)^2(\tau-1)\, d\mu_{\alpha,\beta}(\tau),
\end{equation}
for some \(\xi\in(0,1)\), where \(\phi_{m-1}\) is the nodal polynomial. For \(x\in(0,1)\), one has \(g_x^{(2m-1)}(\xi)<0\). Moreover, \(\phi_{m-1}(\tau)^2\ge0\) and \(\tau-1\le0\) on \([0,1]\), so the integral is non-positive. Hence the remainder is non-negative, and therefore $r^{\,m}_{\,\alpha,\beta}(x)\le \ln x.$
\end{proof}

% With Lemma~\ref{lem:jacobi_bound}, admissible Jacobi-Radau surrogate gives a valid entropy lower bound. We then compare the certified bounds obtained from different quadrature measures. % mix with the previous part, talking about the entropy bound.

\begin{thm}[Variational nodes improvement]
\label{thm:jacobi_improvement}
For an \(\varepsilon\)-biased SV source and parameters \(\alpha,\beta>-1\), let \(H^{\,m,\,(\varepsilon,\omega)}_{\,\alpha,\beta}\) denote the certified lower bound on the conditional von Neumann entropy obtained from an \(m\)-point variational nodes quadrature rule with respect to $d\mu_{\alpha,\beta}(\tau) = (1-\tau)^\alpha\tau^\beta\,d\tau,$ under the energy constraint \(\operatorname{Tr}(\hat H\rho_x^S)\le\omega_x\). If there exists an admissible choice \((\alpha,\beta)\) such that the corresponding logarithmic surrogate is pointwise larger than the Legendre surrogate on the active spectral region, then
\begin{equation}
    H^{m,\,(\varepsilon,\omega)}_{\,\alpha,\beta}(B|X,E,\Lambda) \ge
    H^{m,\,(\varepsilon,\omega)}_{\,0,0}(B|X,E,\Lambda),
\end{equation}
where \((\alpha,\beta)=(0,0)\) is the fixed Legendre case.
\end{thm}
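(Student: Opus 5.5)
The argument is monotonicity: the certified bound is the optimal value of a minimization over feasible strategies in which the surrogate enters only through an order-preserving objective. Fix $m$, $\varepsilon$, $\omega$, and the observed MDL score constraint. Write the certified bound as $H^{m,(\varepsilon,\omega)}_{\alpha,\beta}=\inf_{\theta\in\mathcal F} F_{r^m_{\alpha,\beta}}(\theta)$. Here $\mathcal F$ is the set of prepare-and-measure strategies (states $\rho^S_x$, measurement, Eve's system and side information $\Lambda$) compatible with conditions 1--3, the energy constraint $\operatorname{Tr}(\hat H\rho^S_x)\le\omega_x$, and the observed statistics. The objective $F_{r}$ is obtained from the integral representation of the conditional entropy by replacing $\ln$ with the surrogate $r$.

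The first step is to check that this objective has the form $F_\Phi(\theta)=L(\theta)+\sum_i c_i\Omega_i(\Phi(A_i(\theta)))$ from the Methods, with $c_i\ge0$ and each $\Omega_i$ order-preserving. Concretely, $A_i(\theta)$ are the relevant operators (e.g.\ relative-modular or reduced-state operators) whose spectra lie in $(0,1]$, and $\Omega_i$ is a linear functional of the form $X\mapsto\operatorname{Tr}[\sigma X]$ with $\sigma\succeq0$. This structure is what lets Lemma~\ref{lem:jacobi_bound} certify the bound for every $(\alpha,\beta)$, since $r^m_{\alpha,\beta}\le\ln$ on $(0,1)$ makes each $H^{m}_{\alpha,\beta}$ a valid lower bound on the true entropy. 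Importantly, the feasible set $\mathcal F$ depends only on the physical constraints and the observed data, not on the quadrature measure. The two programs being compared therefore have the same domain and differ only in the objective.

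The second step is to turn the pointwise hypothesis into an operator inequality. Assume $r^m_{\alpha,\beta}(x)\ge r^m_{0,0}(x)$ for all $x$ in the active spectral region $\mathcal D_{\rm act}\subseteq(0,1]$, meaning the set containing the spectra of all $A_i(\theta)$ with $\theta\in\mathcal F$. By the spectral theorem, $r^m_{\alpha,\beta}(A)-r^m_{0,0}(A)=\sum_k\big(r^m_{\alpha,\beta}(x_k)-r^m_{0,0}(x_k)\big)\pi_k\succeq0$ for every such $A$. Since each $\Omega_i$ is order-preserving and $c_i\ge0$, this gives $F_{r^m_{\alpha,\beta}}(\theta)\ge F_{r^m_{0,0}}(\theta)$ for every $\theta\in\mathcal F$. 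The same chain of inequalities as the displayed Methods computation applies, with $r^m_{0,0}$ in place of $r_m$ and $r^m_{\alpha,\beta}$ in place of $\Phi$.

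The third step takes the infimum over the common set $\mathcal F$: a pointwise inequality between objectives passes to their infima. This yields $H^{m,(\varepsilon,\omega)}_{\alpha,\beta}(B|X,E,\Lambda)\ge H^{m,(\varepsilon,\omega)}_{0,0}(B|X,E,\Lambda)$.

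I expect the main obstacle to be the first step. One must confirm that, in the SDP formulation of Supplementary Note 3, the Gauss--Radau terms enter with nonnegative coefficients through order-preserving maps; in the Brown et al.\ form each node contributes a term $\gamma_j\,\cdot$ (an infimum over auxiliary operators), and that infimum must be monotone in the surrogate. One must also confirm that the energy bound $\omega<E_1$ really confines the relevant spectra to the active region on which the hypothesis is assumed. If the entropy is written node-by-node rather than as a function of the operators $A_i$, I would instead argue directly from the integral representation: the surrogate is itself a sum $\sum_j\gamma_j K(x,\tau_j)$, and the hypothesis is applied to the aggregate functional calculus.
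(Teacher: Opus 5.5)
Your proposal is correct and follows the same route as the paper's proof. The paper likewise argues that the certified bound is monotone in the logarithmic surrogate: the pointwise domination $r^m_{0,0}\le r^m_{\alpha,\beta}$ on the active region, pushed through the order-preserving variational inequality of the Methods and then through the infimum over the common feasible set, gives $H^{m,(\varepsilon,\omega)}_{\alpha,\beta}\ge H^{m,(\varepsilon,\omega)}_{0,0}$; your version only makes explicit what the paper leaves implicit (the spectral-theorem step, positivity of the $\Omega_i$, independence of the feasible set from the quadrature measure, and resolving the per-node infima at the level of the aggregate functional calculus), and the relative-modular spectra need not lie in $(0,1]$, which is harmless because your argument only needs the hypothesis wherever those spectra actually lie.
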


\begin{proof}
The certified entropy bound is monotone in the logarithmic surrogate. A larger admissible lower surrogate for \(\ln x\) gives a larger certified lower bound. For a fixed quadrature order \(m\), the approximation error $\mathcal E^{\,m}_{\alpha,\beta}(x) = |\ln x-r^{\,m}_{\alpha,\beta}(x)|$ depends on how the quadrature nodes are distributed over the spectral domain. In the low-energy regime, the relevant generalized eigenvalues are concentrated near the boundary of the interval. The Legendre rule \((\alpha,\beta)=(0,0)\)
does not adapt its nodes to this concentration. By changing \((\alpha,\beta)\), the Jacobi rule moves the quadrature nodes and can reduce the error on the active spectral region. Whenever this produces an admissible surrogate satisfying $r^{\,m}_{\,0,0}(x) \le r^{\,m}_{\,\alpha,\beta}(x) \le \ln x$ on that region, the order-preserving variational inequality gives $H^{\,m,\ (\varepsilon,\omega)}_{\,\alpha,\beta}(B|X,E,\Lambda) \ge H^{\,m,\,(\varepsilon,\omega)}_{\,0,0}(B|X,E,\Lambda).$
\end{proof}

\textbf{Randomness accumulation and extraction.} % We convert the single-round entropy bound into a finite-key randomness rate under an IID attack model. In each round, the same prepare-and-measure realization is used independently, while Eve may hold a quantum system \(E_i\) correlated with the transmitted system \(S_i\). Thus the post-measurement state factorizes as $\rho_{B^nX^nE^n\Lambda^n} = \rho_{BXE\Lambda}^{\otimes n}.$ The adversarial side information is therefore quantum in each round, but the attack does not introduce correlations between different rounds.
We now convert the single-round entropy bound into a finite-size randomness generation rate, considering two possible scenarios.  The first scenario corresponds to the standard setting of SDI protocols, in which the adversary is only allowed to have classical correlation with the devices. This assumption is  well justified in our chip setup, where it is unlikely that quantum correlations with external devices could be maintained by the adversary (at the same time, we allow the preparation and measurement devices to have internal quantum memories, located inside the chip). In this setting, since the adversary is only classically correlated with the devices, an elementary use of the chain rule  guarantees that the total entropy is equal to the single-round entropy times the number of rounds.  

In the second scenario, we allow the adversary to have arbitrary quantum correlations with the devices in each round, but we require that there is no correlation from one round to the next, corresponding to the so-called IID assumption often used in the literature \cite{Christandl,Lütkenhaus,Shor}. It is worth mentioning  that, in principle, the IID assumption could be lifted using an energy-constrained analogue of the Generalized Entropy Accumulation Theorem (GEAT) \cite{Metger2024}.     However, at present it  is unclear how to incorporate the energy constraints  into the GEAT, and  straightforward extensions can be proven to be insecure  \cite{Acin}. For this reason, here we maintain the more conservative IID assumption, while using  a quantum-proof  extractor in which the energy-constrained analog of the GEAT  can be directly incorporated when it becomes available.

Under the IID assumption, the single-round prepare-and-measure realization is used independently in all \(n\) rounds. Eve can hold a quantum system \(E_i\) correlated with the transmitted system \(S_i\), so the post-measurement state is $\rho_{B^nX^nE^n\Lambda^n} = \rho_{BXE\Lambda}^{\otimes n}.$ The adversarial side information is therefore quantum in each round, but the attack does not introduce correlations between different rounds. The single-round conditional von Neumann entropy is certified by the energy-constrained SDI framework. From the observed empirical MDL-type score \(\hat I_\varepsilon^\omega\), we include a statistical margin \(\mu_I\) such that the true single-round score is contained in the relaxed confidence region except with probability \(\varepsilon_{\rm stat}\). We then evaluate the worst-case single-round entropy rate \(h_{\rm}=H(\hat I_\varepsilon^\omega+\mu_I)\) over the SDI quantum set compatible with the energy and the SV-source constraints. The function \(H(I)\) is computed by the single-round entropy SDP.

Under the IID assumption, the conditional von Neumann entropy adds linearly over the rounds, giving a multi-round entropy contribution at least \(n h_{\rm}\) (see Supplementary Note 5). The fully quantum asymptotic equipartition property (AEP) \cite{Tomamichel2009} converts this IID von Neumann entropy bound into a smooth min-entropy bound $ H_{\min}^{\varepsilon_{\rm sm}} $. For smoothing parameter \(\varepsilon_{\rm sm}\),
\begin{equation}
\begin{split}
    H_{\min}^{\varepsilon_{\rm sm}}
    (B^n|X^n,E^n,\Lambda^n) &\ge
    n\,h_{\rm} - \Delta_{\rm AEP}(n,\varepsilon_{\rm sm},|B|),                       \\
    \Delta_{\rm AEP}(n,\varepsilon_{\rm sm},|B|)
    &= \sqrt n\,\kappa_B \sqrt{\log_2\frac{2}{\varepsilon_{\rm sm}^2}} \;,
\end{split}
\label{eq:method_collective_min_entropy}
\end{equation}
where \(\kappa_B=2\log_2(1+2|B|)\). Thus the raw smooth min-entropy used for
extraction is \(k_B=nH(\hat I_\varepsilon^\omega+\mu_I)
-\Delta_{\rm AEP}(n,\varepsilon_{\rm sm},|B|)\).

Finally, the raw string \(B^n\) is converted into the output \(O=\mathrm{Ext}(B^n,Z^d)\) using the quantum-proof two-source extractor described in Supplementary Note~6. Conditioned on acceptance, the output satisfies
\begin{equation}
\begin{split}
    \frac12
    \Pr[\Omega_{\rm acc}]
    \left\|
    \rho_{O X^n Z E^n\Lambda^n|\Omega_{\rm acc}}
    -\chi_O\otimes
    \rho'
    \right\|_1 \le \varepsilon_{\rm sec},
\end{split}
\label{eq:method_collective_security}
\end{equation}
where $\rho'=\rho_{X^n Z E^n\Lambda^n|\Omega_{\rm acc}}$, \(\varepsilon_{\rm sec}\) includes the statistic error, the AEP smoothing error, and the extractor error. In the numerics, we use the Trevisan quantum-proof two-source extractor \cite{Foreman2025}. If the weak seed has length \(d\) and min-entropy \(k_2\), the certified output length is the largest integer \(m\) satisfying \(m \le \frac{1}{10}\bigl(k_B+4(k_2-d)-4\log_2 m +8\log_2\varepsilon_{\rm ext}+9\log_2(4/3)-6\bigr)\).

\section*{Acknowledgements}
We thank Stefano Pironio and Antonio Ac\'in for helpful discussions. 
This work has been supported by the Hong Kong Research Grant Council (RGC) through the Research Impact Fund (RIF) grant  No. R7035-21F  and through the  General Research Fund (GRF) grant  No.~17211122,  by the Chinese Ministry of Science and Technology (MOST) through grant 2023ZD0300600, by  the Open Research Fund of the State Key Laboratory of Photonics and Communications (Grant No. 2026QZKF023),    and by the State Key Laboratory of Quantum Information Technologies and Materials, Chinese University of Hong Kong. Research at the Perimeter Institute is supported by the Government of Canada through the Department of Innovation, Science and Economic Development Canada and by the Province of Ontario through the Ministry of Research, Innovation and Science. 

\section*{Author contributions}
LL, YW, RR, and GC  conceived the research and  developed the theoretical framework. LL and YW performed numerical simulations. LL designed and experimentally implemented the quantum randomness amplification chip and performed the data analysis. LL, YW, RR, and GC jointly wrote the manuscript. 

\section*{Competing interests} The authors declare no competing interests.\\

\section*{Data availability}
The data that support the plots within this paper are available from the corresponding authors on reasonable request.

\section*{Code availability}
Computer codes  are available from the corresponding authors on reasonable request.

\clearpage
\onecolumngrid
\foreach \x in {1,...,42} {
	\includepdf[pages=\x, pagecommand={\thispagestyle{empty}}]{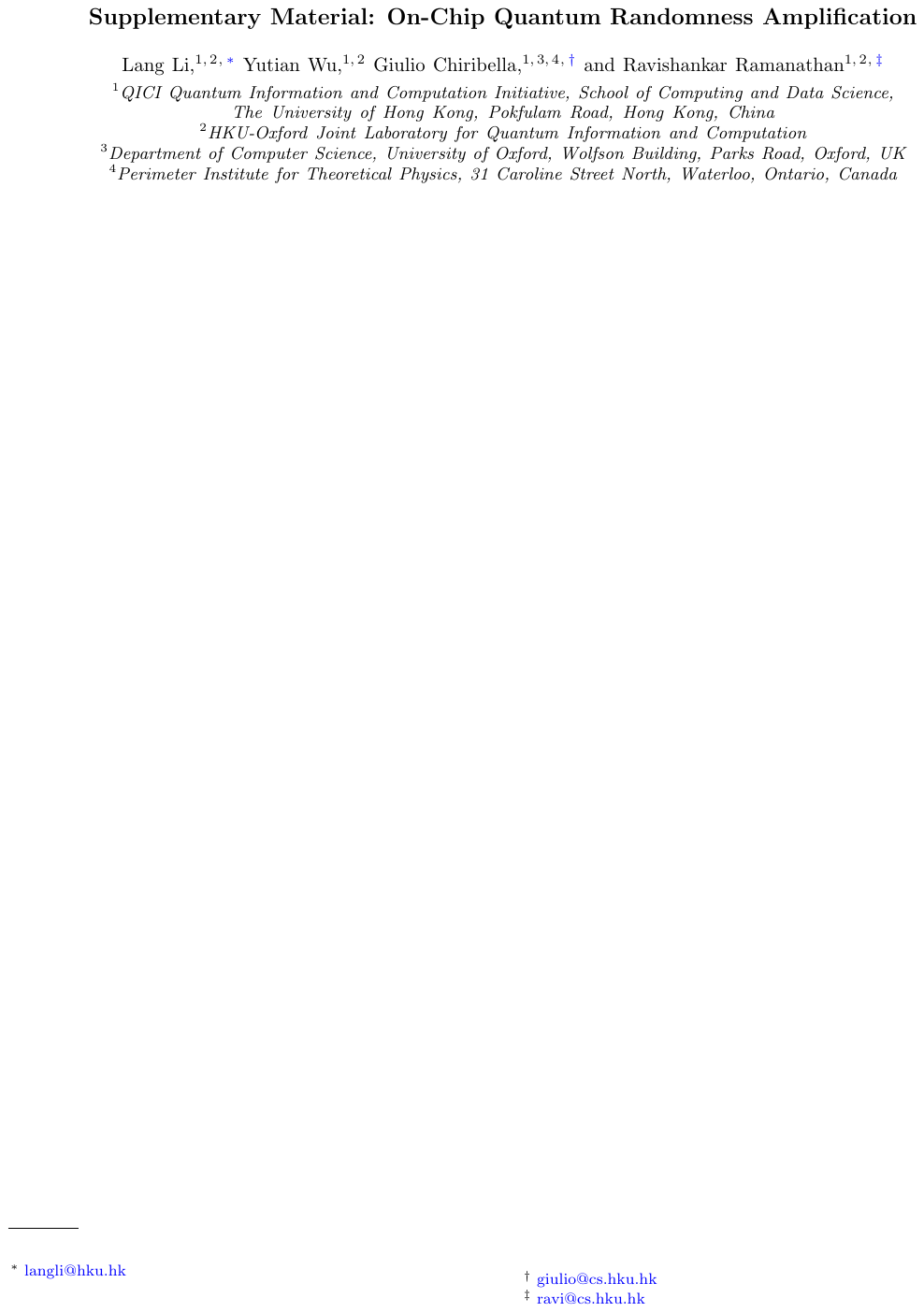}
	\clearpage
}
\twocolumngrid
\clearpage

\end{document}